\documentclass[copyright,creativecommons,noderivs]{eptcs}

\usepackage{amssymb}
\usepackage{url}
\usepackage{algorithm2e}
\usepackage{color}
\usepackage{breakurl}
\usepackage{amsthm}
\usepackage{rotating}
\usepackage{graphicx}

\graphicspath{{data/}}

\usepackage{cite}
\usepackage[caption=false,font=normalsize]{subfig}


\newtheorem{theorem}{Theorem}
\newtheorem{lemma}{Lemma}
\newtheorem{corollary}{Corollary}
\newtheorem{definition}{Definition}



\SetKwInOut{Input}{Input}
\SetKwInOut{Output}{Output}

\title{Lazy Decomposition for Distributed Decision Procedures}

\author{
Youssef Hamadi\institute{Microsoft Research\\Cambridge, UK}\email{youssefh@microsoft.com} \and
Joao Marques-Silva\institute{University College\\Dublin, IE}\email{jpms@ucd.ie} \and
Christoph M. Wintersteiger\institute{Microsoft Research\\Cambridge, UK}\email{cwinter@microsoft.com}
}



\begin{document}
\maketitle


\begin{abstract}
The increasing popularity of automated tools for software and hardware
verification puts ever increasing demands on the underlying decision
procedures. This paper presents a framework for distributed decision procedures
(for first-order problems) based on Craig interpolation. Formulas are
distributed in a lazy fashion, i.e., without the use of costly decomposition
algorithms. Potential models which are shown to be incorrect are reconciled
through the use of Craig interpolants. Experimental results on challenging
propositional satisfiability problems indicate that our method is able to 
outperform traditional solving techniques even without the use of additional
resources.
\end{abstract}


\section{Introduction}

Decision procedures for first-order logic problems, or fragments thereof, have
seen a tremendous increase in popularity in the recent past. This is due to the
great increase in performance of solvers for the propositional satisfiability
(SAT) problem, as well as the increasing popularity of verification tools for
both soft- and hardware which extensively use first-order decision procedures
like SAT and SMT solvers.

As the decision problems that occur in large-scale verification problems become
larger, methods for distribution and parallelization are required to overcome
memory and runtime limitations of modern computer systems. Frequently,
computing clusters and multi-core processors are employed to solve such
problems. The inherent parallelism in these systems is often used to solve
multiple problems concurrently, while distributed and parallel decision
procedures would allow for much better performance. This has led to the
development of distributed verification tools, for example through the
distribution of Bounded-Model-Checking (BMC) problems (see,
e.g.,~\cite{GanaiGYA06,CamposNZS09}).

In the following sections we present a general method for distributed decision
procedures which is applicable to decision procedures of first-order
fragments. The key component in this method is Craig's interpolation
theorem~\cite{Craig57}. This theorem enables us to arbitrarily split formulas
into multiple parts without any restrictions on the nature or size of the
cut. We propose to use this lazy formula decomposition because it does not
require analysis of the semantics of a formula prior to the distribution of the
problem. In many other distributed algorithms, such a lazy decomposition
clearly has a negative impact on the overall runtime of the decision
procedure. However, when using an interpolation scheme, the abstraction
provided by the interpolation algorithm is often strong enough to
counterbalance this.

Through an experimental evaluation of our algorithm on propositional formulas,
we are able to show that, at least in the propositional case, large speed-ups
result from using a lazy decomposition when using a suitable interpolation
algorithm. 


\section{Background}

We are interested in satisfiability of first-order formulas. Formulas are
assumed to be in conjunctive normal form, i.e., of the form 
$\phi = \psi_1(x_1,\dots,x_n)\wedge\dots\wedge\psi_m(x_1,\dots x_n),$ where 
$V_\phi = \{x_1,\dots,x_n\}$ are the free variables of $\phi$ and the $\psi_i$
are clauses (disjunctions of atoms).

Craig's interpolation theorem provides a way to characterize the relationship
between two formulas when one implies the other:

\begin{theorem}[\label{thm:craig}Craig Interpolation~\cite{Craig57}]
Let $\phi$ and $\psi$ be first-order formulas. If $\phi\Rightarrow\psi$
then there exists an \emph{Interpolant} $I$ such that $\phi\Rightarrow I 
\;\wedge\; I\Rightarrow\psi$ and $V_I\subseteq V_\phi\cap V_\psi$.
\end{theorem}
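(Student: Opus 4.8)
The plan is to prove this via the proof-theoretic route, using a cut-free sequent calculus for first-order logic, since this yields an explicit construction of the interpolant (which also mirrors how interpolants are extracted in the SAT/SMT setting of this paper). First I would fix a sound and complete calculus such as Gentzen's $\mathsf{LK}$ and reformulate the hypothesis $\phi\Rightarrow\psi$ as provability of the sequent $\phi\vdash\psi$. By the cut-elimination theorem this sequent has a cut-free derivation, which is essential: it guarantees the subformula property, so every formula appearing in the proof is built only from symbols already present in $\phi$ or $\psi$.

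The heart of the argument is a strengthened induction hypothesis over \emph{partitioned} sequents. For a provable sequent whose antecedent and succedent are each split into two groups, $\Gamma_1,\Gamma_2\vdash\Delta_1,\Delta_2$, I would call a formula $I$ an interpolant when both $\Gamma_1\vdash\Delta_1,I$ and $I,\Gamma_2\vdash\Delta_2$ are provable and every non-logical symbol of $I$ occurs both in $\Gamma_1\cup\Delta_1$ and in $\Gamma_2\cup\Delta_2$. I would then prove by induction on the cut-free derivation that every provable partitioned sequent has such an interpolant. The axioms yield trivial interpolants ($\bot$ or $\top$, depending on the partition), and for each logical rule I would show how to combine the interpolant(s) of the premise(s) into one for the conclusion — typically a conjunction or disjunction of the sub-interpolants for the binary propositional rules. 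Craig's theorem then follows by taking $\Gamma_1=\{\phi\}$, $\Delta_2=\{\psi\}$ and $\Gamma_2=\Delta_1=\emptyset$: the resulting $I$ satisfies $\phi\vdash I$ and $I\vdash\psi$, and its symbols lie in $V_\phi\cap V_\psi$.

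The main obstacle I expect is the quantifier rules. For the propositional connectives the inductive step is routine, but when a $\forall$-right or $\exists$-left rule introduces an eigenvariable, that eigenvariable (or the term it stands for) may occur in the sub-interpolant while belonging to only one side of the partition, violating the shared-symbol condition. Repairing this requires binding the offending variable with an appropriately chosen quantifier — $\exists$ or $\forall$ according to which side it came from — so that the symbol condition is restored while both provability obligations are preserved; verifying the eigenvariable side-conditions here is the delicate part.

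Finally, I would note that for the propositional fragment relevant to the experiments the construction collapses to something much simpler and fully constructive: interpolants can be read directly off a resolution refutation of $\phi\wedge\neg\psi$ by labelling each input clause with a partial interpolant and propagating through resolution steps, so no quantifier machinery is needed. An alternative, purely model-theoretic proof — assuming no interpolant exists and amalgamating a model of $\phi$ with a model of $\neg\psi$ that agree on the common language — is also available, but it is non-constructive and leans on Robinson's joint consistency theorem, so I would favour the proof-theoretic plan above.
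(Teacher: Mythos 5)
The paper does not prove this theorem at all: it is imported as a classical background result with a citation to Craig's 1957 paper, so there is no ``paper proof'' to compare against. Judged on its own terms, your plan is the standard proof-theoretic route (Maehara's method): cut-elimination, a strengthened induction over partitioned sequents $\Gamma_1,\Gamma_2\vdash\Delta_1,\Delta_2$, trivial interpolants at the axioms, conjunction/disjunction at the binary rules, and quantification of eigenvariables to repair the symbol condition at the $\forall$-right/$\exists$-left steps. That outline is correct, and you have rightly identified the quantifier case as the only genuinely delicate point --- though note that it is precisely the part you have not actually carried out, so as written this is a proof plan rather than a proof. Two further remarks. First, the paper's statement is phrased only in terms of shared \emph{free variables} $V_I\subseteq V_\phi\cap V_\psi$; your construction yields the stronger standard conclusion about all shared non-logical symbols, which subsumes what is claimed here. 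Second, your closing observation that the propositional case collapses to labelling a resolution refutation is exactly the form in which the paper actually uses interpolation (the McMillan and HKP systems described immediately after the theorem), so that remark is the part of your answer most relevant to the rest of the paper. Be careful with the degenerate case where $\phi$ and $\psi$ share no symbols: the interpolant must then be $\top$ or $\bot$, so the calculus (or the statement) must admit these as formulas.
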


Equivalently, there is an interpolant $I$ such that 
$\phi\Rightarrow I \;\wedge\; I\Rightarrow\neg\psi$
whenever $\phi\wedge\psi$ is unsatisfiable, because
$\phi\Rightarrow\neg\psi\equiv \neg(\phi\wedge\psi)$.
Craig's theorem guarantees the existence of an interpolant, but does not
provide an algorithm for obtaining it. However, such algorithms are known for
many logics. We refer to two interpolation algorithms for propositional logic
and to describe them, we require some definitions: 

A literal $l$ is either a propositional variable $x$ or its negation $\neg
x$. A clause is a disjunction of literals, denoted $\{ l_1, \dots, l_n \}$. A
formula $\phi$ is assumed to be in conjunctive normal form (CNF), i.e., it is a
conjunction of clauses, denoted $\phi = \{ c_1, \dots, c_n \}$. 

A (partial) assignment $\alpha$ is a consistent set of clauses of size 1. The
(propositional) SAT problem is to determine whether for a given formula $\phi$
there exists a total assignment such that $\phi \wedge \alpha\equiv\top$. 
Given two clauses of the form $C_1 \cup \{x\}$ and $C_2 \cup \{\neg x\}$, 
their resolution is defined as $C_1\cup C_2$ and if it is not tautological the
result is called a \emph{resolvent} and the variable $x$ is called the pivot
variable. A resolution refutation is a sequence of resolution operations such
that the final resolvent is empty, proving that the formula is unsatisfiable.

There are multiple techniques for propositional interpolation. Here,
we refer to two popular systems, one by McMillan~\cite{McMillan03} and the
other by Huang, Kraj\'icek and Pudl\'ak~\cite{Huang95,Krajicek97,Pudlak97}
(HKP). Both are methods that require time linear in the size of the resolution
refutation of $\neg(\phi\wedge\psi)$. Both interpolation methods construct
interpolants by associating an intermediate interpolant
with every resolvent in the resolution refutation of $\phi\wedge\psi$. The
interpolant associated with the final resolvent (the empty clause) constitutes
an interpolant $I$ for which Theorem~\ref{thm:craig} holds. For a
characterization of these and other interpolation systems see,
e.g.,~\cite{DSilva10}.

McMillan's interpolation system associates with every clause
$C$ in $\phi$ the intermediate interpolant $C \setminus \{ v, \neg v | v \in
V_\psi\}$, i.e., the restriction of $C$ to the variables in $\psi$. Every
clause in $\psi$ is associated with the intermediate interpolant $\top$. Every other
interpolant is calculated depending on the corresponding resolution
step. Consider the derivation of resolvent $R$ from clauses $C_1$ and $C_2$
with pivot variable $x$, where $C_1$ and $C_2$ have previously been associated
with intermediate interpolants $I_{C_1}$ and $I_{C_2}$. The resulting clause $R$
is then associated with the intermediate interpolant
\begin{displaymath}
I_R := \left\{
\begin{array}{cl}
  I_{C_1} \vee I_{C_2} & \mbox{if } x\in V_\phi \wedge x\not\in V_\psi\\
  I_{C_1} \wedge I_{C_2} & \mbox{if } x\in V_\phi \wedge x\in V_\psi\\
  I_{C_1} \wedge I_{C_2} & \mbox{if } x\not\in V_\phi \wedge x\in V_\psi
\end{array}
\right.
\;.
\end{displaymath}

In the HKP system, every clause in $\phi$ is associated the intermediate
interpolant $\bot$, while the clauses in $\psi$ are associated $\top$. Every
resolvent $R$ obtained from clauses $C_1$ and $C_2$ with pivot variable $x$ is
associated with the intermediate interpolant 
\begin{displaymath}
I_R := \left\{
\begin{array}{cl}
  I_{C_1} \vee I_{C_2} & \mbox{if } x\in V_\phi \wedge x\not\in V_\psi\\
  (x \vee I_{C_1}) \wedge (\neg x \vee I_{C_2}) & \mbox{if } x\in V_\phi \wedge x\in V_\psi\\
  I_{C_1} \wedge I_{C_2} & \mbox{if } x\not\in V_\phi \wedge x\in V_\psi
\end{array}
\right.
\;.
\end{displaymath}

For every propositional interpolation system that computes an interpolant
for $\phi\Rightarrow\psi$, the \emph{dual} system is defined by the computation
of an interpolant for $I$ for $\psi\Rightarrow\phi$, with the effect that 
$\neg I$ is an interpolant for $\phi\Rightarrow\psi$. It is known that the HKP
system is self-dual and that McMillan's system is not~\cite{DSilva10}.


\section{Related Work}

Our work is most closely related to parallel and distributed decision
procedures. Many decision procedures exists for the propositional
satisfiability and some of them exploit parallelism. 

\subsection{Parallel SAT Solving}

In parallel SAT, the objective is to simultaneously explore different parts of
the search space in order to quickly solve a problem. There are two main
approaches to parallel SAT solving. First, the classical concept of
divide-and-conquer, which divides the search space into subspaces and allocate
each of them to sequential SAT solvers. The search space is divided thanks to 
guiding-path constraints (typically unit clauses). A formula is found
satisfiable if one worker is able to find a solution for its subspace, and
unsatisfiable if all the subspaces have been proved unsatisfiable. Workers
usually cooperate through a load balancing strategy which performs the dynamic
transfer of subspaces to idle workers, and through the exchange of
conflict-clauses~\cite{gradsat,pminisat}.

In 2008, Hamadi et al.~\cite{ManySAT08,ManySATJSAT09} introduced the parallel
portfolio approach. This method exploits the complementarity between different
sequential DPLL strategies to let them compete and cooperate on the original
formula. Since each worker deals with the whole formula, there is no need for
load balancing, and the cooperation is only achieved through the exchange of
learnt clauses. Moreover, the search process is not artificially influenced by
the original set of guiding-path constraints like in the first category of
techniques. With this approach, the crafting of the strategies is important,
and the objective is to cover the space of the search strategies in the best
possible way. 

The main drawback of parallel SAT techniques comes from their required
replication of the formula. This is obvious for the parallel portfolio
approach. It is also true for divide-and-conquer algorithms whose guiding-path
constraints do not produce significantly smaller subproblems (only $log_2 c$
variables have to be set to obtain $c$ subproblems). This makes these
techniques only applicable to problems which fit into the memory of a single
machine. 

In the last two years, portfolio-based parallel solvers became prominent and
it has been used in SMT decision procedures as well~\cite{WintersteigerHM09}. 
We are not aware of a recently developed improvements on the divide-and-conquer
approach (the latest being \cite{pminisat}).
We give a brief description of the parallel solvers qualified for the 2010 SAT
Race\footnote{\url{http://baldur.iti.uka.de/sat-race-2010}}:
 
\begin{itemize}
\item In \texttt{plingeling}~\cite{plingeling}, the original SAT instance is
  duplicated by a boss thread and allocated to worker threads. The strategies
  used by these workers are mainly differentiated around the amount of
  pre-processing, random seeds, and variables branching. Conflict clause
  sharing is restricted to units which are exchanged through the boss
  thread. This solver won the parallel track of the 2010 SAT Race.

\item \texttt{ManySAT} \cite{ManySATJSAT09} was the first parallel SAT
  portfolio. It duplicates the instance of the SAT problem to solve, and runs
  independent SAT solvers differentiated on their restart policies, branching
  heuristics, random seeds, conflict clause learning, etc. It exchanges clauses
  through various policies. Two versions of this solver were presented at the
  2010 SAT Race, they finished second and third.

\item In \texttt{SArTagnan}, \cite{sartagnan} different SAT algorithms are
  allocated to different threads, and differentiated with respect to restart
  policies and VSIDS heuristics. Some threads apply a dynamic resolution
  process~\cite{plingeling,lazyhyperbinary} or exploit reference
  points~\cite{Kottler10}. Some others try to simplify a shared
  clauses database by performing dynamic variable elimination or
  replacement. This solver finished fourth.

\item In \texttt{Antom}~\cite{antom}, the SAT algorithms are differentiated on
  decision heuristic, restart strategy, conflict clause detection, lazy hyper
  binary resolution \cite{plingeling,lazyhyperbinary}, and dynamic unit
  propagation lookahead. Conflict clause sharing is implemented. This solver
  finished fifth.  
\end{itemize}

\subsection{Distributed SAT Solving}

Contrary to parallel SAT, in distributed SAT, the goal is to handle problems
which are by nature distributed or, even more interestingly, to handle problems
which are too large to fit into the memory of a single computing
node. Therefore, the speed-up against a sequential execution is not necessarily
the main objective, and in some cases (large instances) cannot even be measured. 

To the best of our knowledge, the only relevant work in the area presents an
architecture tailored for large distributed Bounded Model
Checking~\cite{GanaiGYA06}. The objective is to perform deep BMC unwindings
thanks to a network of standard computers, where the SAT formulas become so
large that they cannot be handled by any one of the machines. This approach
uses a master/slaves topology, and the unrolling of an instance 
provides a natural partitioning of the problem in a chain of workers. Each
worker has to reconcile its local solution with its neighbors. The master
distributes the parts, and controls the search. First, based on proposals
coming from the slaves, it selects a globally best variable to branch on. From
that decision, each worker performs Boolean Constraint Propagation (BCP) on its
subproblem, and the master performs the same on the globally learnt
clauses. The master maintains the global assignment, and to ensure the
consistency of the parallel BCP algorithms propagates to the slaves Boolean
implications. The master also records the causality of these implications which
allows him to perform conflict-analysis when required.

\subsection{Interpolation}

McMillan's propositional interpolation system~\cite{McMillan03} when employed
in a suitable Model Checking algorithm, has been shown to perform competitively
with algorithms based purely on SAT-solving, i.e., McMillan showed that the
abstraction obtained through interpolation for Model Checking problems is at
least as good as and sometimes better than previously known abstraction
methods.


\section{Lazy Decomposition}

When considering distributed decision procedures, it is usually assumed that
the formulas which are to be solved are too large to be solved on a single
computing node. Under this premise, strategies for distributing a formula have
to be employed. If there exists a quantifier elimination algorithm for the
fragment considered, then it is straight-forward, but comparatively expensive
to distribute the problem: Find sparsely connected partitions of the formula
and eliminate the connections such that the partitions become
independent. For example, let formula $\phi=\phi_1\wedge\phi_2$ where the
partitions $\phi_1$ and $\phi_2$ overlap on variables $X=V_{\phi_1}\cap
V_{\phi_2}$. The elimination of $X$ from $\exists X \;.\; \phi_1\wedge\phi_2$
produces two independent parts $\phi_1'$ and $\phi_2'$, which, respectively,
depend on variables $V_{\phi_1}\setminus X$ and $V_{\phi_2}\setminus X$ and
therefore can be solved independently. While this distribution strategy is
quite simple, it depends on the existence of a quantifier elimination
algorithm. Furthermore, the performance of such an algorithm in practice
depends greatly on the fact that the problem is sparsely connected, which is
not generally a given. We therefore use a different and cheaper method for
distribution: 

\begin{definition}[Lazy Decomposition]
Let $\phi$ be in conjunctive normal form, i.e.,
$\phi=\phi_1\wedge\dots\wedge\phi_n$. A \emph{lazy decomposition} of $\phi$
into $k$ partitions is an equivalent set of formulas $\{\psi_1,\dots,\psi_k\}$
such that each $\psi_i$ is equivalent to some conjunction of clauses from
$\phi$, i.e., there exist $a,b$ $(a<b<n),$ such that
$\psi_i=\phi_a\wedge\dots\wedge\phi_b$.
\end{definition}

We call this a \emph{lazy} decomposition, because no effort is made to ensure
that partitions do not share variables. The formulas $\psi_1\dots\psi_k$ may
then be solved independently, but if the partitions happen to share variables,
i.e., when $V_{\psi_i}\cap V_{\psi_j}\neq\emptyset$ for some $j\neq i$, then
these (potentially global) solutions have to be reconciled. 

Let $S_{\psi_i}=\bigvee_j m_{i,j}$ be the set of all models
satisfying $\psi_i$ and let $S=\{ S_{\psi_1}, \dots, S_{\psi_k} \}$ be a set of
all models of all partitions. The \emph{reconciliation problem} is to
determine whether $S_{\psi_1} \wedge\dots\wedge S_{\psi_k}$ is satisfiable,
i.e., to determine whether there is a global model in $S_\phi$ which has a matching
extension in each $S_{\psi_i}$. Clearly, any set of models is not
required to be any smaller in representation than its corresponding partition;
in fact, it may be exponentially larger. In practice it may
therefore be more efficient to build the solution sets incrementally, avoiding
any blowup wherever possible. To this end, we require the following lemma:
\begin{lemma}\label{lem:decomposition}
Let $\phi=\psi_1\wedge\dots\wedge\psi_k$ and let $m$ be a model for
the shared variables $V:=\bigcup_{i,j=1}^k V_{\psi_i} \cap V_{\psi_j}$ and let
$1\leq i\leq k$. If $I$ is an interpolant for $\neg(\psi_i \wedge m)$ then 
$\phi\Rightarrow I$.
\end{lemma}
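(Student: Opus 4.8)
The plan is to obtain $\phi\Rightarrow I$ by transitivity, chaining the trivial implication $\phi\Rightarrow\psi_i$ with the ``left half'' of the interpolant property. First I would unpack the hypothesis: since $I$ is an interpolant for $\neg(\psi_i\wedge m)$, the conjunction $\psi_i\wedge m$ must be unsatisfiable (otherwise no such refutation-style interpolant exists), and $m$, being a model of the shared variables $V$, can be read as a conjunction of unit literals over $V$, i.e.\ a partial assignment. I would then invoke the refutation form of Craig's theorem recorded just after Theorem~\ref{thm:craig}: with $\psi_i$ in the role of $\phi$ and $m$ in the role of $\psi$, unsatisfiability of $\psi_i\wedge m$ yields an interpolant $I$ satisfying $\psi_i\Rightarrow I$, $I\Rightarrow\neg m$, and $V_I\subseteq V_{\psi_i}\cap V_m$.

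The second step is the decomposition observation. Because $\phi=\psi_1\wedge\dots\wedge\psi_k$ and $1\le i\le k$, the partition $\psi_i$ is one of the conjuncts, so $\phi\Rightarrow\psi_i$ holds immediately. Combining this with the left implication $\psi_i\Rightarrow I$ from the previous step gives $\phi\Rightarrow\psi_i\Rightarrow I$, hence $\phi\Rightarrow I$, which is exactly the claim.

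The genuinely delicate point is not the algebra but fixing the \emph{orientation} of the interpolant correctly. The statement hinges on $\psi_i$ (not $m$) playing the $A$-side of the refutation, so that the half of the interpolant property we recover is $\psi_i\Rightarrow I$ rather than $m\Rightarrow I$; the notation $\neg(\psi_i\wedge m)$, with $\psi_i$ written first, is what pins this down, mirroring the convention $\neg(\phi\wedge\psi)$ used for Theorem~\ref{thm:craig}. I would therefore make this orientation explicit at the outset. The constraint $V_I\subseteq V_{\psi_i}\cap V_m=V_{\psi_i}\cap V$ is not needed to establish $\phi\Rightarrow I$, but I would remark on it, since it is precisely what makes $I$ a sound lemma expressible over the shared variables alone, and hence shareable among the other partitions in the reconciliation procedure.
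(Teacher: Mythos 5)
Your proposal is correct and follows exactly the paper's argument: read $I$ as an interpolant for $\psi_i\Rightarrow\neg m$ to obtain $\psi_i\Rightarrow I$, then chain with the trivial $\phi\Rightarrow\psi_i$. Your additional remarks on the orientation of the interpolant and the variable condition $V_I\subseteq V_{\psi_i}\cap V$ are accurate elaborations of what the paper leaves implicit.
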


\begin{proof}
$I$ is an interpolant for $\neg(\psi_i \wedge m)$ or, equivalently, for
$\psi_i\Rightarrow \neg m$. Therefore $\psi_i\Rightarrow I$. Since
$\phi\Rightarrow\psi_i,$ we also have $\phi\Rightarrow I$.
\end{proof}

\begin{algorithm}[t]
\Input{Formula $\phi$}
\Output{$\top$ if $\phi$ is satisfiable, $\bot$ otherwise}
$\psi_1,\dots,\psi_k$ := $decompose(\phi)$\;
$G$ := $\top$\;
$flag$ := true\;

\While{$flag$}
{
\eIf{$G \equiv \bot$}{\Return $\bot$\;}
{Let $m$ be a total model for $G$\;}

$flag$ := false\;
\ForEach{$i$ in $1\dots k$}
{
  \If{$\psi_i \wedge m \equiv \bot$}
  {
    Let $I$ be an interpolant for $\neg(\psi_i \wedge m)$ over 
    $V_{\psi_i}\cap V_{G}$\; 
    $G$ := $G\wedge I$\;
    $flag$ := true\;
  }
}
}
\Return{$\top$}\;
\caption{\label{alg:recon}A reconciliation algorithm.}
\end{algorithm}

Algorithm~\ref{alg:recon} presents a simple method that makes use of this Lemma
to solve a decomposed formula. First, it extracts a model $m$ for $G$, which is
over the shared variables of the decomposition (the globals). It then attempts to 
extend the model to models satisfying each of the partitions and returns $\top$
if this was successful. Otherwise, it extracts an interpolant $I$ from every
unsatisfiable partition which is subsequently used to refine $G$. When $G$ is
found to be unsatisfiable the algorithm returns $\bot$ as there cannot be any
model that is extensible to all partitions.

The maximum number of iterations require by Algorithm~\ref{alg:recon} depends
on the number and the domain of the shared variables in the decomposition. Of
course, this motivates the use of decomposition techniques that find partitions
with little overlap and on the other hand, motivates the use of interpolation
techniques that produce (logically) weak interpolants such that every
interpolant covers as many (global) models as possible. 

\begin{theorem}
Algorithm~\ref{alg:recon} is sound.
\end{theorem}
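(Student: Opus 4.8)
The plan is to prove both halves of soundness—that a $\bot$ answer implies unsatisfiability and a $\top$ answer implies satisfiability—by first isolating a single loop invariant and then reading off each case from it.

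First I would establish the invariant $\phi \Rightarrow G$, maintained at every point where $G$ is inspected. At initialization $G = \top$, so $\phi \Rightarrow \top$ holds trivially. The only place $G$ changes is the assignment $G := G \wedge I$, where $I$ is an interpolant for $\neg(\psi_i \wedge m)$. By Lemma~\ref{lem:decomposition}, any such interpolant satisfies $\phi \Rightarrow I$; combined with $\phi \Rightarrow G$ from the induction hypothesis, this yields $\phi \Rightarrow G \wedge I$, closing the induction. The restriction of $I$ to the vocabulary $V_{\psi_i} \cap V_G$ does not affect this argument, since the proof of the lemma uses only $\psi_i \Rightarrow I$ together with $\phi \Rightarrow \psi_i$.

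The $\bot$ case is then immediate: the algorithm returns $\bot$ only when $G \equiv \bot$, and the invariant $\phi \Rightarrow G$ forces $\phi \equiv \bot$, i.e. $\phi$ is unsatisfiable. For the $\top$ case I would argue that a $\top$ return occurs precisely when the final pass through the inner loop leaves $flag$ false, meaning $\psi_i \wedge m$ is satisfiable for every $i$, where $m$ is the total assignment to the shared variables $V = \bigcup_{i \neq j} V_{\psi_i} \cap V_{\psi_j}$ chosen in that iteration. For each $i$ I would fix an extension $m_i$ of $m$ witnessing $\psi_i \wedge m$, assigning the local variables $V_{\psi_i} \setminus V$. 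Since every variable shared between two partitions lies in $V$ and is already pinned by $m$, the sets of local variables assigned by the distinct $m_i$ are pairwise disjoint and all the $m_i$ agree with $m$—hence with each other—on $V$. Their union is therefore a well-defined total assignment satisfying every $\psi_i$ simultaneously, and since $\phi = \psi_1 \wedge \dots \wedge \psi_k$, this assignment satisfies $\phi$.

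The main obstacle is the compatibility argument in the $\top$ case, and it hinges on reading $m$ as a total assignment to all of $V$ rather than only to the variables syntactically occurring in $G$: if some shared variable were left unassigned by $m$, two partitions could extend it inconsistently and the union would fail to be a model. I would make this reading explicit and verify that the interpolant-vocabulary restriction $V_{\psi_i} \cap V_G$ keeps $G$ within the global vocabulary $V$, so that each successive $m$ ranges over exactly the shared variables. The remaining ingredients—the invariant and the two case readings—are routine once this point is pinned down.
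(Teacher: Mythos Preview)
Your proof is correct and follows essentially the same approach as the paper: both rely on Lemma~\ref{lem:decomposition} to ensure that every interpolant conjoined to $G$ is implied by $\phi$ (you make this explicit as the invariant $\phi\Rightarrow G$), and both read the $\top$ case as the existence of a shared-variable model $m$ extendable in every partition. Your treatment is in fact more careful than the paper's two-sentence proof, particularly in spelling out why the per-partition extensions of $m$ are mutually compatible and in flagging the need to read $m$ as total over \emph{all} shared variables.
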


\begin{proof}
When the algorithm returns $\top$, there is a model $m$ for $G$ which has an
extension in every $\psi_i$. Conversely, if the algorithm returns $\bot$, then
every potential model $m$ is contained in some interpolant which implies $\neg
m$, which is an immediate consequence of Lemma~\ref{lem:decomposition}.
\end{proof}

\begin{theorem}
Algorithm~\ref{alg:recon} is complete for formulas over finite-domain variables.
\end{theorem}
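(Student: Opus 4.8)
The plan is to prove completeness for finite-domain variables, which means showing that Algorithm~\ref{alg:recon} terminates and, whenever it terminates, returns the correct answer (soundness is already established by the previous theorem). The key observation is that the shared variables $V$ range over finite domains, so there are only finitely many distinct total models $m$ for $G$. The strategy is to argue that each iteration of the outer \texttt{while}-loop that does not terminate permanently eliminates at least one such model from consideration, so that after finitely many iterations the procedure must halt.

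First I would make precise the invariant that $G$ always implies $\phi$ restricted to the shared variables, or more usefully, that $G$ over-approximates the set of shared-variable assignments extensible to a global model: by Lemma~\ref{lem:decomposition}, every interpolant $I$ conjoined to $G$ satisfies $\phi\Rightarrow I$, so no assignment extensible to a model of $\phi$ is ever excluded by the refinement $G := G\wedge I$. Next I would show the progress property: if in some iteration the current model $m$ for $G$ fails to extend to some $\psi_i$ (so $\psi_i\wedge m\equiv\bot$), then the extracted interpolant $I$ satisfies $m\Rightarrow\neg I$ (since $I\Rightarrow\neg m$ by the interpolant property, and $m$ being a total assignment over the shared variables is a conjunction of literals whose negation $I$ entails), whence $m$ no longer satisfies $G\wedge I$. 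Thus $m$ is permanently removed from the models of $G$.

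Because the domain of each shared variable is finite, the set of candidate total models over $V$ is finite, say of cardinality $N$. Every non-terminating iteration sets $\mathit{flag}$ to \texttt{true} and strictly decreases the number of models of $G$ by at least one, so there can be at most $N$ such iterations before either $G\equiv\bot$ (returning $\bot$) or no partition rejects the current model (returning $\top$). This establishes termination. Combined with soundness, termination gives completeness: since the algorithm always halts with a definite answer and that answer is correct, it decides satisfiability of $\phi$.

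The main obstacle I anticipate is the progress argument, specifically justifying that the interpolant $I$ genuinely excludes the offending model $m$ rather than merely some superset containing it. The interpolant property only guarantees $I\Rightarrow\neg m$, i.e., that $I$ is \emph{consistent} with excluding $m$, and one must verify that $m$ itself falsifies $G\wedge I$; this requires that $m$ be a complete assignment over the shared variables $V_{\psi_i}\cap V_G$ so that $I\Rightarrow\neg m$ forces $m\not\models I$. A secondary subtlety is that interpolants are taken over $V_{\psi_i}\cap V_G$ rather than over all of $V$, so I would need to confirm that the relevant shared variables appearing in $m$ are exactly those over which the interpolant is computed, ensuring the exclusion of $m$ is well-defined and that $G$ remains a formula over the globals throughout.
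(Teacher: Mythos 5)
Your proof follows essentially the same route as the paper's: each non-terminating iteration of the loop permanently excludes at least one model of $G$, and since the shared variables range over finite domains there are only finitely many candidate models, so the algorithm must terminate. Your version is in fact more careful than the paper's — you explicitly justify the progress step (that $I\Rightarrow\neg m$ together with $m$ being a total assignment over the interpolant's variables forces $m\not\models G\wedge I$), which the paper asserts without argument — but the underlying idea is identical.
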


\begin{proof}
Every iteration of the algorithm excludes at least one possible model from
$G$ (otherwise the algorithm would terminate and return $\top$). For formulas
over finite-domain variables there is only a finite number of potential
variables. Therefore, $G$ has to become unsatisfiable at some point, forcing
$G\equiv\bot$ and therefore termination of the algorithm.
\end{proof}

\section{Interpolation and Conflict Clauses}

The DPLL procedure is an algorithm that solves the SAT problem (see,
e.g.,~\cite{NieuwenhuisOT06}). It does so by evaluating a series of partial
assignments until a total assignment is found. When a partial assignment is
found to be inconsistent with the input formula, DPLL backtracks to a previous
(smaller) assignment. Modern incarnations of this algorithm use
conflict-driven backtracking, which means that the conflicting state of the
solver is analyzed and a \emph{conflict clause} is derived. It is required that
every conflict clause be implied by the original formula, that it is over the
variables of the current assignment, and that it be inconsistent with the
current assignment. Any conflict clause is therefore redundant, but it may help
to prevent further conflicts when it is kept in the clause database (in which
case it is called a \emph{learnt} clause). We think it worthwile to
characterize the relationship between conflict clauses and interpolants:

\begin{corollary}
Every conflict clause for a propositional formula $\phi$ derived under
the partial assignment $\alpha$ is an interpolant for 
$\phi\Rightarrow\neg\alpha.$
\end{corollary}

\begin{proof}
According to the definition of a conflict clause $C$, it must be implied by
$\phi$ and inconsistent with $\alpha$. We therefore have 
$\phi\Rightarrow C$ and
$\neg (C \wedge \alpha)\equiv C\Rightarrow \neg \alpha,$
which makes $C$ an interpolant for $\phi\Rightarrow\neg\alpha$ by
Theorem~\ref{thm:craig}.
\end{proof}

Currently, the most popular conflict resolution scheme for DPLL-style
solvers is the so-called First-UIP method (for a definition
see~\cite{sat-handbook}). The corollary stated above raises the question
whether other interpolation methods are able to improve upon this
scheme. Note that the First-UIP scheme has some properties which make it very
efficient in practice: 
\begin{itemize}
\item a conflict clause can be computed in linear time and
\item every such clause is \emph{asserting}, i.e., it contains a unique literal
  which is unassigned after backtracking.
\end{itemize}

More general interpolation schemes like McMillan's or HKP also have the first
of these properties since interpolants are usually computed in linear time from a
resolution proof. Therefore, an interpolant may be computed in linear time,
too, if the resolution proof of the current conflict is kept in the state of
the solver. This, however, is much more expensive than keeping the reasons for
implications as is done in the First-UIP scheme.

An interpolant is generally not of clause form. If it is to be kept as part of
the problem (akin to a learnt clause) it therefore requires conversion. The
straight-forward expansion to CNF may increase the size of the interpolant
exponentially. The Tseitin transformation~\cite{tseitin68} increases the size
of the interpolant only linearly, but introduces new variables. It is not clear
which of these methods is to be preferred. In general, however, a \emph{set} of
conflict clauses is produced, instead of a single clause like in the First-UIP
scheme.

An interpolant is also asserting in the sense that it is asserted to be true;
however, it is not immediately asserting a specific literal like a First-UIP
conflict clause. A preliminary experimental evaluation (of which the details
are omitted) has shown that none of the known propositional interpolation
methods performs better than the First-UIP scheme. This, however, may be due to
the lack of an efficient interpolation algorithm that matches the performance
of the algorithm for First-UIP conflict resolution. 


\section{Experimental Evaluation}
\label{sec:experiments}

As a first step in evaluating our algorithm, we implemented a propositional
satisfiability solver based on the MiniSAT solver. We restrict ourselves to
the slightly outdated version 1.14p, because propositional interpolation
methods require proof production, which is not available in more recent
versions of MiniSAT~\cite{EenS03}. Interpolants are produced by iterating over
the resolution proof, which is saved (explicitly) in memory. We use Reduced
Boolean Circuits (RBCs~\cite{AbdullaBE00}) to represent interpolants such that
recurring structure is exploited. Furthermore, Algorithm~\ref{alg:recon}
permits the exploitation of state-of-the-art SAT solver technology, like
incremental solving techniques in solving partitions. Furthermore, every
assignment to the globals is a set of clauses of size 1, which means that
facilities for solving a formula under assumptions may be made use of. 
The lazy decomposition used by our implementation is indeed quite trivial: it
simply divides the clauses of the problem into a predefined number $p$ of
equally sized partitions. Clauses are ordered as they appear in the input file
and each partition $i$ is assigned the clauses numbered from $i \cdot
\frac{n}{p}$ to $(i+1) \cdot \frac{n}{p}$, where $n$ is the total number of
clauses.

Our implementation is evaluated on set of formulas which are small but hard to
solve~\cite{AloulRMS02}\footnote{\url{http://www.aloul.net/benchmarks.html}}. They
are known to contain symmetries, which potentially can be exploited by
interpolation. For this evaluation, our implementation uses only a single
processing element, i.e., the evaluation of the partitions of a decomposition
is sequentialized. Through this, we are able to show that our algorithm
performs well even when using the same resources as a traditional
solver. Preliminary experiments have shown that an actual (shared-memory)
parallelization of our algorithm performs better than the sequentialized
version, but not significantly so, which is due to the lack of a load-balancing
mechanism to balance the runtime of partition evaluation.

All our experiments are executed on a Windows HPC cluster of dual Quad-Xeon
2.5~GHz processors with 16~GB of memory each, using a timeout of 3600~seconds
and a memory limit of 2~GB.

\begin{figure}[p]
\centering
\subfloat[][\label{fig:symmetry-m}McMillan interpolants]
{ 
  \input{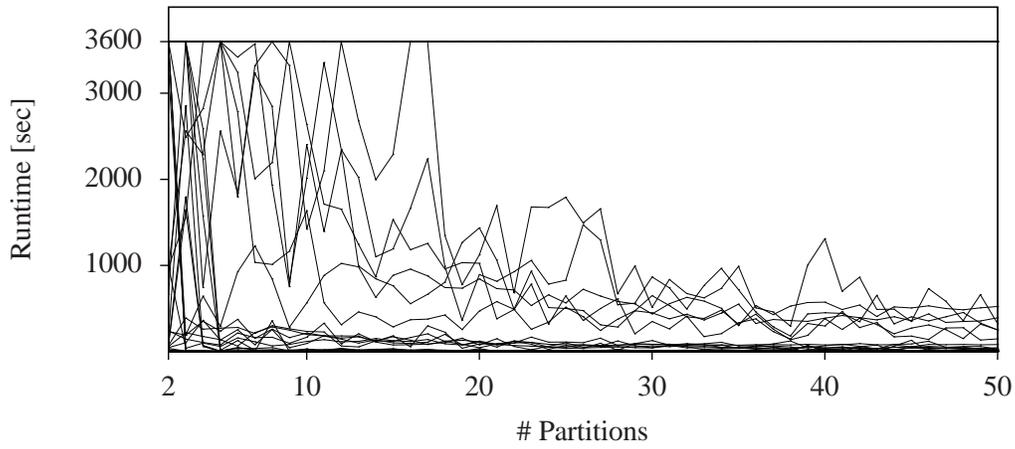}
}
\quad
\subfloat[][\label{fig:symmetry-p}HKP interpolants]
{
  \input{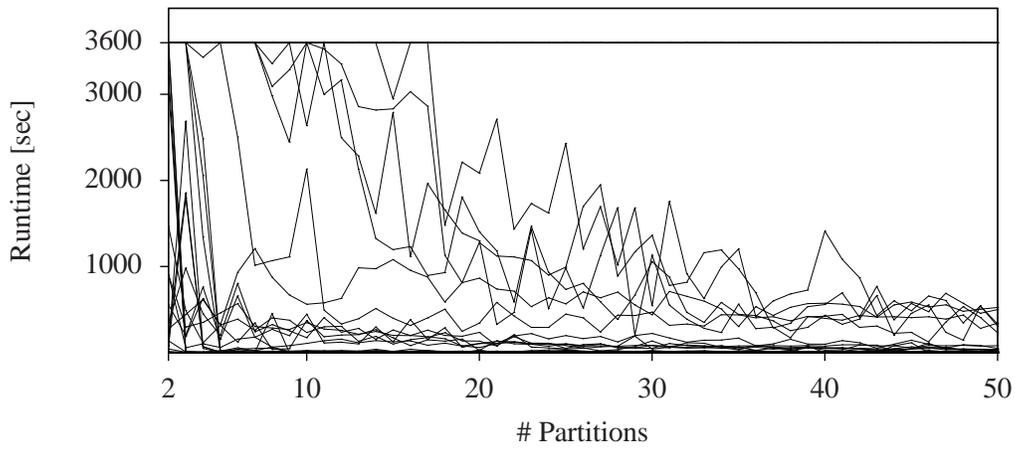}
}
\qquad
\subfloat[][\label{fig:symmetry-im}Dual McMillan interpolants]
{
  \input{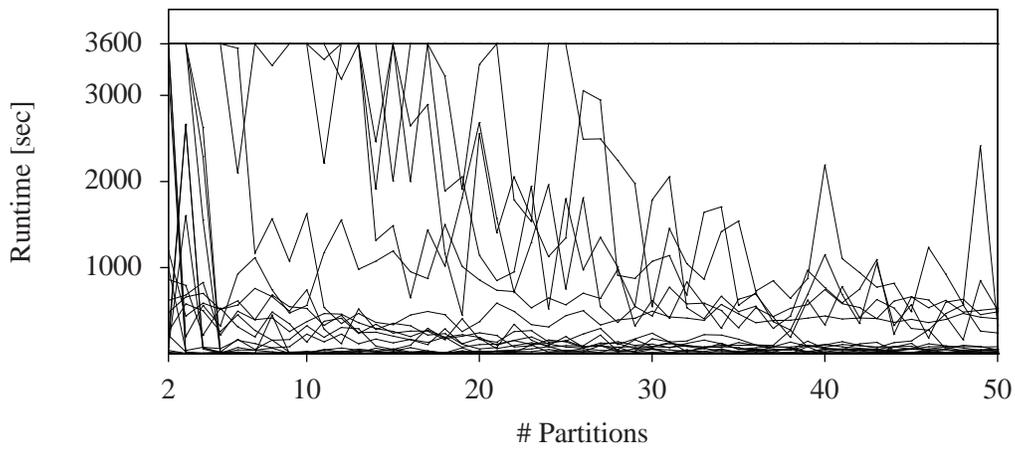}
}
\caption{Decomposition into $\{2,\dots,50\}$ partitions, using different
  interpolation systems.}
\end{figure}

\begin{table}[p]
\centering
\scriptsize

\begin{tabular}{|l||r|r||r|r|r|r|r|r||r|}\hline
&\multicolumn{2}{|c||}{\bf Minisat}&\multicolumn{7}{|c|}{\bf Decomposition (\# partitions)}\\\hline
\bf Filename & \bf 2.2.0 & \bf 1.14p & \bf 5 & \bf 10 & \bf 20 & \bf 30 & \bf 40 & \bf 50 & \bf Best\\\hline\hline
chnl10\_11.cnf & 169.39 & 33.52 & 3.00 & 3.20 & 1.92 & 2.14 & 1.29 & \bf 1.22 & \bf 0.98\\\hline
chnl10\_12.cnf & 165.02 & 20.95 & 2.46 & 2.07 & 4.17 & 2.07 & \bf 1.53 & 2.15 & \bf 0.98\\\hline
chnl10\_13.cnf & 204.77 & 15.23 & 2.93 & \bf 1.87 & 3.40 & 2.57 & 2.40 & 2.07 & \bf 0.89\\\hline
chnl11\_12.cnf & T/O & 291.61 & \bf 5.23 & 6.05 & 17.02 & 14.29 & 6.46 & 7.16 & \bf 4.45\\\hline
chnl11\_13.cnf & T/O & 960.72 & 5.69 & \bf 4.70 & 18.28 & 12.62 & 7.55 & 6.07 & \bf 4.45\\\hline
chnl11\_20.cnf & T/O & 2346.30 & 13.31 & 15.07 & \bf 5.05 & 10.50 & 10.33 & 12.39 & \bf 4.96\\\hline
fpga10\_8\_sat.cnf & \bf 0.00 & 0.02 & 0.02 & 0.02 & 0.03 & 0.03 & 0.03 & 0.06 & \bf 0.00\\\hline
fpga10\_9\_sat.cnf & \bf 0.02 & \bf 0.02 & \bf 0.02 & \bf 0.02 & \bf 0.02 &
0.03 & 0.03 & 0.03 & \bf 0.02\\\hline
fpga12\_8\_sat.cnf & \bf 0.00 & 0.02 & 0.02 & 0.02 & 0.03 & 0.03 & 0.03 & 0.05 & \bf 0.00\\\hline
fpga12\_9\_sat.cnf & 0.03 & \bf 0.02 & 0.03 & \bf 0.02 & 0.05 & 0.05 & 0.05 & 0.05 & \bf 0.02\\\hline
fpga12\_11\_sat.cnf & 0.02 & \bf 0.00 & 0.03 & 0.03 & 0.03 & 0.03 & 0.05 & 0.06 & \bf 0.00\\\hline
fpga12\_12\_sat.cnf & \bf 0.00 & 0.02 & 0.05 & 0.05 & 0.03 & 0.03 & 0.03 & 0.05 & 0.02\\\hline
fpga13\_9\_sat.cnf & \bf 0.00 & \bf 0.00 & 0.03 & 0.02 & 0.03 & 0.06 & 0.05 &
0.08 & 0.02\\\hline
fpga13\_10\_sat.cnf & \bf 0.00 & \bf 0.00 & 0.05 & 0.03 & 0.05 & 0.05 & 0.05 & 0.06 & 0.02\\\hline
fpga13\_12\_sat.cnf & \bf 0.00 & \bf 0.00 & 0.06 & 0.08 & 0.11 & 0.09 & 0.12 & 0.12 & 0.02\\\hline
hole7.cnf & 0.08 & \bf 0.05 & 0.08 & 0.11 & 0.08 & 0.08 & 0.11 & 0.11 & 0.06\\\hline
hole8.cnf & 0.37 & 0.33 & \bf 0.31 & 0.51 & 0.37 & 0.53 & 0.37 & 0.47 & \bf 0.30\\\hline
hole9.cnf & 6.51 & 3.59 & 1.56 & 3.03 & 2.31 & 1.93 & \bf 1.40 & 1.65 & \bf 1.22\\\hline
hole10.cnf & 247.11 & 26.75 & \bf 10.09 & 22.76 & 11.22 & 12.93 & 12.50 & 10.65 & \bf 4.34\\\hline
hole11.cnf & T/O & 509.86 & \bf 59.12 & 116.42 & 109.61 & 81.88 & 75.93 & 80.84 & \bf 37.03\\\hline
hole12.cnf & T/O & T/O & \bf 293.08 & 564.38 & 843.20 & 435.18 & 572.90 & 522.65 & \bf 187.70\\\hline
s3-3-3-1.cnf & 0.44 & \bf 0.16 & 2.98 & 0.30 & 0.20 & 0.23 & 0.27 & 0.39 & \bf 0.16\\\hline
s3-3-3-3.cnf & 1.20 & \bf 0.02 & 1.90 & 0.33 & 0.23 & 0.28 & 0.50 & 0.39 & 0.12\\\hline
s3-3-3-4.cnf & 0.44 & 1.50 & 3.51 & 0.23 & \bf 0.14 & 0.19 & 0.42 & 0.51 & \bf 0.12\\\hline
s3-3-3-8.cnf & 0.34 & 0.80 & 0.98 & 0.37 & \bf 0.20 & 0.48 & 0.56 & 0.42 & \bf
0.19\\\hline
s3-3-3-10.cnf & 0.47 & 0.81 & 1.23 & 0.41 & \bf 0.36 & 0.48 & 0.67 & 0.55 & \bf 0.28\\\hline
Urq3\_5.cnf & 310.22 & \bf 58.27 & T/O & 1633.50 & 470.33 & 448.89 & 386.04 & 391.56 & 243.55\\\hline
Urq4\_5.cnf & T/O & T/O & T/O & T/O & T/O & T/O & T/O & T/O & T/O\\\hline
Urq5\_5.cnf & T/O & T/O & T/O & T/O & T/O & T/O & T/O & T/O & T/O\\\hline
Urq6\_5.cnf & T/O & T/O & T/O & T/O & T/O & T/O & T/O & T/O & T/O\\\hline
Urq7\_5.cnf & T/O & T/O & M/O & T/O & T/O & T/O & T/O & T/O & T/O\\\hline
Urq8\_5.cnf & T/O & T/O & T/O & T/O & T/O & T/O & T/O & T/O & T/O\\\hline
fpga10\_8\_sat\_rcr.cnf & \bf 0.00 & \bf 0.00 & 0.03 & 0.03 & 0.03 & 0.03 & 0.05 & 0.05 & 0.02\\\hline
fpga10\_9\_sat\_rcr.cnf & \bf 0.00 & \bf 0.00 & 0.02 & 0.05 & 0.03 & 0.05 & 0.06 & 0.08 & \bf 0.00\\\hline
fpga12\_8\_sat\_rcr.cnf & \bf 0.00 & 0.02 & 0.03 & 0.02 & 0.08 & 0.03 & 0.09 & 0.06 & 0.02\\\hline
fpga12\_9\_sat\_rcr.cnf & \bf 0.02 & \bf 0.02 & \bf 0.02 & 0.03 & 0.05 & 0.05 & 0.05 & 0.08 & \bf 0.02\\\hline
fpga12\_10\_sat\_rcr.cnf & 0.02 & \bf 0.00 & 0.03 & 0.05 & 0.05 & 0.08 & 0.09 & 0.08 & 0.02\\\hline
fpga12\_11\_sat\_rcr.cnf & \bf 0.02 & 0.03 & \bf 0.02 & 0.06 & 0.06 & 0.09 & 0.08 & 0.09 & \bf 0.02\\\hline
fpga12\_12\_sat\_rcr.cnf & 0.02 & \bf 0.00 & 0.05 & 0.50 & 0.42 & 0.09 & 0.14 & 0.19 & 0.02\\\hline
fpga13\_9\_sat\_rcr.cnf & \bf 0.00 & 0.02 & 0.03 & 0.03 & 0.05 & 0.05 & 0.06 & 0.09 & 0.02\\\hline
fpga13\_10\_sat\_rcr.cnf & \bf 0.03 & 0.05 & \bf 0.03 & 0.05 & 0.14 & 0.08 & 0.08 & 0.14 & \bf 0.02\\\hline
fpga13\_11\_sat\_rcr.cnf & \bf 0.00 & \bf 0.00 & 0.05 & 0.03 & 0.06 & 0.09 & 0.12 & 0.11 & 0.03\\\hline
fpga13\_12\_sat\_rcr.cnf & \bf 0.00 & \bf 0.00 & 0.08 & 30.83 & 34.30 & 0.12 & 0.25 & 0.12 & 0.02\\\hline
fpga10\_11\_uns\_rcr.cnf & 173.04 & 32.09 & 235.20 & 213.47 & 110.25 & 41.68 & 41.23 & \bf 29.39 & \bf 26.97\\\hline
fpga10\_12\_uns\_rcr.cnf & 484.76 & 109.31 & 137.37 & 197.23 & 120.71 & 41.65 & 25.79 & \bf 13.81 & \bf 13.81\\\hline
fpga10\_13\_uns\_rcr.cnf & 1279.94 & 110.32 & 132.26 & 92.70 & 74.96 & 65.26 & \bf 38.77 & 42.46 & \bf 20.12\\\hline
fpga10\_15\_uns\_rcr.cnf & T/O & 215.50 & 271.72 & 232.78 & 42.85 & 83.32 & 26.13 & \bf 20.64 & \bf 8.25\\\hline
fpga10\_20\_uns\_rcr.cnf & 3089.90 & 696.29 & 76.05 & 164.60 & 87.72 & 125.91 & 65.74 & \bf 28.24 & \bf 28.24\\\hline
fpga11\_12\_uns\_rcr.cnf & 3204.84 & 449.22 & T/O & 2404.97 & 1435.07 & 648.40 & 380.16 & \bf 251.60 & \bf 134.86\\\hline
fpga11\_13\_uns\_rcr.cnf & T/O & 1933.98 & T/O & 2638.34 & 1023.83 & 348.48 & 295.11 & \bf 146.20 & \bf 132.34\\\hline
fpga11\_14\_uns\_rcr.cnf & T/O & 3199.81 & 2562.86 & 2011.51 & 893.40 & 866.23 & 450.22 & \bf 250.13 & \bf 179.15\\\hline
fpga11\_15\_uns\_rcr.cnf & T/O & T/O & T/O & 1417.94 & 1121.73 & 512.84 & 1309.82 & \bf 321.85 & \bf 294.67\\\hline
fpga11\_20\_uns\_rcr.cnf & T/O & T/O & T/O & 2828.49 & 2141.49 & 1068.31 & 577.14 & \bf 352.92 & \bf 154.24\\\hline
\end{tabular}

\caption{\label{tbl:runtimes}Runtime comparison with MiniSAT (versions 2.2.0
  and 1.14p), using McMillan's interpolation system. Bold numbers indicate
  the smallest runtime in the decompositions shown here or the best
  decomposition into $\{2,\dots,50\}$ partitions (right-most column).}
\end{table}

To assess the impact of the decomposition on the solver performance, we
investigate all decompositions into 2 to 50 partitions for each of the three
interpolation methods (McMillan's, Dual McMillan's and
HKP). Each line in Figures~\ref{fig:symmetry-m},~\ref{fig:symmetry-p},
and~\ref{fig:symmetry-im} corresponds to one benchmark file and indicates the
change in runtime required to solve the benchmark when using from 2 up to 50
partitions in the decomposition. These figures provide strong evidence for an
improvement of the runtime behavior as the number of partitions increase. Note
that, as mentioned before, the evaluation of partitions was sequentialized for
this experiment, i.e., this effect is \emph{not} due to an increasing number of
resources being utilized.
The graphs in Figures~\ref{fig:symmetry-m},~\ref{fig:symmetry-p},
and~\ref{fig:symmetry-im} provide equally strong evidence for the utility of
McMillan's interpolation system: the impact on the runtime is the largest and
most consistent of all three interpolation systems.

Finally, Table~\ref{tbl:runtimes} provides a comparison of the runtime of
MiniSAT versions 2.2.0 and 1.14p with a selection of different decompositions,
the right-most column indicating the time of the best decomposition found
among all those evaluated. It is clear from this table that no single
partitioning can be identified as the best overall method. However, some
decompositions, like the one into 50 partitions, perform consistently well and
almost always better than either versions of MiniSAT.


\section{Conclusion}

We present the concept of lazy distribution for first-order decision
procedures. Formulas are decomposed into partitions without the need for
quantifier elimination or any other method for logical disconnection of the
partitions. Instead, local models for the partitions are reconciled globally
through the use of Craig interpolation. Experiments using different
interpolation systems and decompositions for propositional formulas indicate
that our approach performs better than traditional solving methods even when
sequentialized, i.e., when no additional resources are used. At the same time,
our algorithm provides straight-forward opportunities for parallelization and
distribution of the solving process.

\nocite{*}
\bibliographystyle{eptcs}
\bibliography{literature}

\begin{thebibliography}{10}
\providecommand{\bibitemdeclare}[2]{}
\providecommand{\urlprefix}{Available at }
\providecommand{\url}[1]{\texttt{#1}}
\providecommand{\href}[2]{\texttt{#2}}
\providecommand{\urlalt}[2]{\href{#1}{#2}}
\providecommand{\doi}[1]{doi:\urlalt{http://dx.doi.org/#1}{#1}}
\providecommand{\bibinfo}[2]{#2}

\bibitemdeclare{inproceedings}{AbdullaBE00}
\bibitem{AbdullaBE00}
\bibinfo{author}{Parosh~Aziz Abdulla}, \bibinfo{author}{Per Bjesse} \&
  \bibinfo{author}{Niklas E{\'e}n} (\bibinfo{year}{2000}):
  \emph{\bibinfo{title}{Symbolic Reachability Analysis Based on SAT-Solvers}}.
\newblock In: {\sl \bibinfo{booktitle}{Proc. of TACAS}}, {\sl
  \bibinfo{series}{LNCS}} \bibinfo{volume}{1785},
  \bibinfo{publisher}{Springer}, pp. \bibinfo{pages}{411--425},
  \doi{10.1007/3-540-46419-0_28}.

\bibitemdeclare{inproceedings}{AloulRMS02}
\bibitem{AloulRMS02}
\bibinfo{author}{Fadi~A. Aloul}, \bibinfo{author}{Arathi Ramani},
  \bibinfo{author}{Igor~L. Markov} \& \bibinfo{author}{Karem~A. Sakallah}
  (\bibinfo{year}{2002}): \emph{\bibinfo{title}{Solving difficult {SAT}
  instances in the presence of symmetry}}.
\newblock In: {\sl \bibinfo{booktitle}{Proc. of DAC}},
  \bibinfo{publisher}{ACM}, pp. \bibinfo{pages}{731--736},
  \doi{10.1145/513918.514102}.

\bibitemdeclare{inproceedings}{AlpernHRSZ90}
\bibitem{AlpernHRSZ90}
\bibinfo{author}{Bowen Alpern}, \bibinfo{author}{Roger Hoover},
  \bibinfo{author}{Barry~K. Rosen}, \bibinfo{author}{Peter~F. Sweeney} \&
  \bibinfo{author}{F.~Kenneth Zadeck} (\bibinfo{year}{1990}):
  \emph{\bibinfo{title}{Incremental Evaluation of Computational Circuits}}.
\newblock In: {\sl \bibinfo{booktitle}{Proc. of the ACM-SIAM Symposium on
  Discrete Algorithms (SODA)}}, \bibinfo{publisher}{ACM}, pp.
  \bibinfo{pages}{32--42}.

\bibitemdeclare{article}{AspvallPT79}
\bibitem{AspvallPT79}
\bibinfo{author}{Bengt Aspvall}, \bibinfo{author}{Michael~F. Plass} \&
  \bibinfo{author}{Robert~Endre Tarjan} (\bibinfo{year}{1979}):
  \emph{\bibinfo{title}{A Linear-Time Algorithm for Testing the Truth of
  Certain Quantified {Boolean} Formulas}}.
\newblock {\sl \bibinfo{journal}{Inf. Process. Lett.}}
  \bibinfo{volume}{8}(\bibinfo{number}{3}), pp. \bibinfo{pages}{121--123},
  \doi{10.1016/0020-0190(79)90002-4}.

\bibitemdeclare{techreport}{lazyhyperbinary}
\bibitem{lazyhyperbinary}
\bibinfo{author}{Armin Biere} (\bibinfo{year}{2009}):
  \emph{\bibinfo{title}{Lazy hyper binary resolution}}.
\newblock \bibinfo{type}{Technical Report}, \bibinfo{institution}{Dagstuhl
  Seminar 09461}.

\bibitemdeclare{techreport}{plingeling}
\bibitem{plingeling}
\bibinfo{author}{Armin Biere} (\bibinfo{year}{2010}):
  \emph{\bibinfo{title}{Lingeling, Plingeling, PicoSAT and PrecoSAT at {SAT}
  Race 2010}}.
\newblock \bibinfo{type}{Technical Report} \bibinfo{number}{10/1},
  \bibinfo{institution}{FMV Reports Series}.

\bibitemdeclare{proceedings}{sat-handbook}
\bibitem{sat-handbook}
\bibinfo{editor}{Armin Biere}, \bibinfo{editor}{Marijn Heule},
  \bibinfo{editor}{Hans van Maaren} \& \bibinfo{editor}{Toby Walsh}, editors
  (\bibinfo{year}{2009}): \emph{\bibinfo{title}{Handbook of Satisfiability}}.
  {\sl \bibinfo{series}{Frontiers in Artificial Intelligence and Applications}}
  \bibinfo{volume}{185}, \bibinfo{publisher}{IOS Press}.

\bibitemdeclare{inproceedings}{BjesseKDSZ03}
\bibitem{BjesseKDSZ03}
\bibinfo{author}{Per Bjesse}, \bibinfo{author}{James~H. Kukula},
  \bibinfo{author}{Robert~F. Damiano}, \bibinfo{author}{Ted Stanion} \&
  \bibinfo{author}{Yunshan Zhu} (\bibinfo{year}{2004}):
  \emph{\bibinfo{title}{Guiding SAT Diagnosis with Tree Decompositions}}.
\newblock In: {\sl \bibinfo{booktitle}{Theory and Applications of
  Satisfiability Testing, 6th International Conference, SAT 2003, Selected
  Revised Papers}}, {\sl \bibinfo{series}{Lecture Notes in Computer Science}}
  \bibinfo{volume}{2919}, \bibinfo{publisher}{Springer}, pp.
  \bibinfo{pages}{315--329}, \doi{10.1007/978-3-540-24605-3_24}.

\bibitemdeclare{inproceedings}{CamposNZS09}
\bibitem{CamposNZS09}
\bibinfo{author}{S.~Campos}, \bibinfo{author}{J.~Neves},
  \bibinfo{author}{L.~Zarate} \& \bibinfo{author}{M.~Song}
  (\bibinfo{year}{2009}): \emph{\bibinfo{title}{Distributed {BMC}: A
  Depth-First Approach to Explore Clause Symmetry}}.
\newblock In: {\sl \bibinfo{booktitle}{Proc. of the Intl. Conf. and Workshop on
  the Engineering of Computer Based Systems (ECBS 2009)}},
  \bibinfo{publisher}{IEEE Press}, pp. \bibinfo{pages}{89--94},
  \doi{10.1109/ECBS.2009.26}.

\bibitemdeclare{techreport}{gradsat}
\bibitem{gradsat}
\bibinfo{author}{Wahid Chrabakh} \& \bibinfo{author}{Rich Wolski}
  (\bibinfo{year}{2003}): \emph{\bibinfo{title}{{G}r{ADSAT}: A Parallel
  \protect{SAT} Solver for the Grid}}.
\newblock \bibinfo{type}{Technical Report}, \bibinfo{institution}{UCSB Computer
  Science}.

\bibitemdeclare{article}{Craig57}
\bibitem{Craig57}
\bibinfo{author}{William Craig} (\bibinfo{year}{1957}):
  \emph{\bibinfo{title}{Linear Reasoning. A New Form of the
  {Herbrand}-{Gentzen} Theorem}}.
\newblock {\sl \bibinfo{journal}{J. Symb. Log.}}
  \bibinfo{volume}{22}(\bibinfo{number}{3}), pp. \bibinfo{pages}{250--268},
  \doi{10.2307/2963593}.

\bibitemdeclare{article}{DPLL}
\bibitem{DPLL}
\bibinfo{author}{Martin Davis}, \bibinfo{author}{George Logemann} \&
  \bibinfo{author}{Donald Loveland} (\bibinfo{year}{1962}):
  \emph{\bibinfo{title}{A machine program for theorem-proving}}.
\newblock {\sl \bibinfo{journal}{Commun. ACM}} \bibinfo{volume}{5}, pp.
  \bibinfo{pages}{394--397}, \doi{10.1145/368273.368557}.

\bibitemdeclare{inproceedings}{DSilva10}
\bibitem{DSilva10}
\bibinfo{author}{Vijay D'Silva} (\bibinfo{year}{2010}):
  \emph{\bibinfo{title}{Propositional Interpolation and Abstract
  Interpretation}}.
\newblock In: {\sl \bibinfo{booktitle}{Proc. of ESOP}}, {\sl
  \bibinfo{series}{LNCS}} \bibinfo{volume}{6012},
  \bibinfo{publisher}{Springer}, pp. \bibinfo{pages}{185--204},
  \doi{10.1007/978-3-642-11957-6_11}.

\bibitemdeclare{inproceedings}{EenS03}
\bibitem{EenS03}
\bibinfo{author}{Niklas E{\'e}n} \& \bibinfo{author}{Niklas S{\"o}rensson}
  (\bibinfo{year}{2003}): \emph{\bibinfo{title}{An Extensible SAT-solver}}.
\newblock In: {\sl \bibinfo{booktitle}{Proc. of SAT}}, {\sl
  \bibinfo{series}{LNCS}} \bibinfo{volume}{2919},
  \bibinfo{publisher}{Springer}, pp. \bibinfo{pages}{502--518},
  \doi{10.1007/978-3-540-24605-3_37}.

\bibitemdeclare{article}{GanaiGYA06}
\bibitem{GanaiGYA06}
\bibinfo{author}{Malay~K. Ganai}, \bibinfo{author}{Aarti Gupta},
  \bibinfo{author}{Zijiang Yang} \& \bibinfo{author}{Pranav Ashar}
  (\bibinfo{year}{2006}): \emph{\bibinfo{title}{Efficient distributed {SAT} and
  {SAT}-based distributed Bounded Model Checking}}.
\newblock {\sl \bibinfo{journal}{Intl. J. on Software Tools for Technology
  Transfer (STTT)}} \bibinfo{volume}{8}(\bibinfo{number}{4-5}), pp.
  \bibinfo{pages}{387--396}, \doi{10.1007/s10009-005-0203-z}.

\bibitemdeclare{techreport}{pminisat}
\bibitem{pminisat}
\bibinfo{author}{Chu Gorey} \& \bibinfo{author}{Peter~J. Stuckey}
  (\bibinfo{year}{2008}): \emph{\bibinfo{title}{{PMiniSAT}: a parallelization
  of {MiniSAT} 2.0}}.
\newblock \bibinfo{type}{Technical Report}, \bibinfo{institution}{{SAT} Race}.

\bibitemdeclare{inproceedings}{HaeuplerKMST08}
\bibitem{HaeuplerKMST08}
\bibinfo{author}{Bernhard Haeupler}, \bibinfo{author}{Telikepalli Kavitha},
  \bibinfo{author}{Rogers Mathew}, \bibinfo{author}{Siddhartha Sen} \&
  \bibinfo{author}{Robert~Endre Tarjan} (\bibinfo{year}{2008}):
  \emph{\bibinfo{title}{Faster Algorithms for Incremental Topological
  Ordering}}.
\newblock In: {\sl \bibinfo{booktitle}{Proc. of ICALP}}, {\sl
  \bibinfo{series}{LNCS}} \bibinfo{volume}{5125},
  \bibinfo{publisher}{Springer}, pp. \bibinfo{pages}{421--433},
  \doi{10.1007/978-3-540-70575-8_35}.

\bibitemdeclare{article}{ManySAT08}
\bibitem{ManySAT08}
\bibinfo{author}{Youssef Hamadi}, \bibinfo{author}{Said Jabbour} \&
  \bibinfo{author}{Lakhdar Sais} (\bibinfo{year}{2008}):
  \emph{\bibinfo{title}{Many{SAT}: Solver Description}}.
\newblock {\sl \bibinfo{journal}{Microsoft Research Technical Report
  MSR-TR-2008-83}} .

\bibitemdeclare{article}{ManySATJSAT09}
\bibitem{ManySATJSAT09}
\bibinfo{author}{Youssef Hamadi}, \bibinfo{author}{Said Jabbour} \&
  \bibinfo{author}{Lakhdar Sais} (\bibinfo{year}{2009}):
  \emph{\bibinfo{title}{Many{SAT}: a Parallel {SAT} Solver}}.
\newblock {\sl \bibinfo{journal}{J. on Satisfiability, Boolean Modeling and
  Computation}} \bibinfo{volume}{6}, pp. \bibinfo{pages}{245--262}.

\bibitemdeclare{inproceedings}{Huang95}
\bibitem{Huang95}
\bibinfo{author}{Guoxiang Huang} (\bibinfo{year}{1995}):
  \emph{\bibinfo{title}{Constructing Craig Interpolation Formulas}}.
\newblock In: {\sl \bibinfo{booktitle}{Proc. of the Intl. Conf. on Computing
  and Combinatorics (COCOON)}}, {\sl \bibinfo{series}{LNCS}}
  \bibinfo{volume}{959}, \bibinfo{publisher}{Springer}, pp.
  \bibinfo{pages}{181--190}.

\bibitemdeclare{inproceedings}{HyvarinenJN10}
\bibitem{HyvarinenJN10}
\bibinfo{author}{Antti Eero~Johannes Hyv{\"a}rinen}, \bibinfo{author}{Tommi~A.
  Junttila} \& \bibinfo{author}{Ilkka Niemel{\"a}} (\bibinfo{year}{2010}):
  \emph{\bibinfo{title}{Partitioning SAT Instances for Distributed Solving}}.
\newblock In: {\sl \bibinfo{booktitle}{Proc. of LPAR}}, {\sl
  \bibinfo{series}{LNCS}} \bibinfo{volume}{6397},
  \bibinfo{publisher}{Springer}, pp. \bibinfo{pages}{372--386},
  \doi{10.1007/978-3-642-16242-8_27}.

\bibitemdeclare{techreport}{sartagnan}
\bibitem{sartagnan}
\bibinfo{author}{Stephan Kottler} (\bibinfo{year}{2010}):
  \emph{\bibinfo{title}{{SA}r{T}agnan: Solver Description}}.
\newblock \bibinfo{type}{Technical Report}, \bibinfo{institution}{{SAT} Race
  2010}.

\bibitemdeclare{inproceedings}{Kottler10}
\bibitem{Kottler10}
\bibinfo{author}{Stephan Kottler} (\bibinfo{year}{2010}):
  \emph{\bibinfo{title}{{SAT} Solving with Reference Points}}.
\newblock In: {\sl \bibinfo{booktitle}{Proc. of SAT}}, {\sl
  \bibinfo{series}{LNCS}} \bibinfo{volume}{6175},
  \bibinfo{publisher}{Springer}, pp. \bibinfo{pages}{143--157},
  \doi{10.1007/978-3-642-14186-7_13}.

\bibitemdeclare{article}{Krajicek97}
\bibitem{Krajicek97}
\bibinfo{author}{Jan Kraj\'{\i}cek} (\bibinfo{year}{1997}):
  \emph{\bibinfo{title}{Interpolation Theorems, Lower Bounds for Proof Systems,
  and Independence Results for Bounded Arithmetic}}.
\newblock {\sl \bibinfo{journal}{J. Symb. Log.}}
  \bibinfo{volume}{62}(\bibinfo{number}{2}), pp. \bibinfo{pages}{457--486},
  \doi{10.2307/2275541}.

\bibitemdeclare{article}{MarchettiNR96}
\bibitem{MarchettiNR96}
\bibinfo{author}{Alberto Marchetti-Spaccamela}, \bibinfo{author}{Umberto Nanni}
  \& \bibinfo{author}{Hans Rohnert} (\bibinfo{year}{1996}):
  \emph{\bibinfo{title}{Maintaining a Topological Order Under Edge
  Insertions}}.
\newblock {\sl \bibinfo{journal}{Inf. Process. Lett.}}
  \bibinfo{volume}{59}(\bibinfo{number}{1}), pp. \bibinfo{pages}{53--58},
  \doi{10.1016/0020-0190(96)00075-0}.

\bibitemdeclare{inproceedings}{McMillan03}
\bibitem{McMillan03}
\bibinfo{author}{Kenneth~L. McMillan} (\bibinfo{year}{2003}):
  \emph{\bibinfo{title}{Interpolation and {SAT}-Based Model Checking}}.
\newblock In: {\sl \bibinfo{booktitle}{Proc. of CAV}}, {\sl
  \bibinfo{series}{LNCS}} \bibinfo{volume}{2725},
  \bibinfo{publisher}{Springer}, pp. \bibinfo{pages}{1--13}.

\bibitemdeclare{inproceedings}{MoskewiczMZZM01}
\bibitem{MoskewiczMZZM01}
\bibinfo{author}{Matthew~W. Moskewicz}, \bibinfo{author}{Conor~F. Madigan},
  \bibinfo{author}{Ying Zhao}, \bibinfo{author}{Lintao Zhang} \&
  \bibinfo{author}{Sharad Malik} (\bibinfo{year}{2001}):
  \emph{\bibinfo{title}{Chaff: Engineering an Efficient {SAT} Solver}}.
\newblock In: {\sl \bibinfo{booktitle}{Proc. of DAC}},
  \bibinfo{publisher}{ACM}, pp. \bibinfo{pages}{530--535}.

\bibitemdeclare{article}{NieuwenhuisOT06}
\bibitem{NieuwenhuisOT06}
\bibinfo{author}{Robert Nieuwenhuis}, \bibinfo{author}{Albert Oliveras} \&
  \bibinfo{author}{Cesare Tinelli} (\bibinfo{year}{2006}):
  \emph{\bibinfo{title}{Solving {SAT} and {SAT} Modulo Theories: From an
  abstract {Davis}--{Putnam}--{Logemann}--{Loveland} procedure to {DPLL(T)}}}.
\newblock {\sl \bibinfo{journal}{J. ACM}}
  \bibinfo{volume}{53}(\bibinfo{number}{6}), pp. \bibinfo{pages}{937--977},
  \doi{10.1145/1217856.1217859}.

\bibitemdeclare{article}{PearceK06}
\bibitem{PearceK06}
\bibinfo{author}{David~J. Pearce} \& \bibinfo{author}{Paul H.~J. Kelly}
  (\bibinfo{year}{2006}): \emph{\bibinfo{title}{A dynamic topological sort
  algorithm for directed acyclic graphs}}.
\newblock {\sl \bibinfo{journal}{ACM J. of Experimental Algorithmics}}
  \bibinfo{volume}{11}, \doi{10.1145/1187436.1210590}.

\bibitemdeclare{article}{Pudlak97}
\bibitem{Pudlak97}
\bibinfo{author}{Pavel Pudl{\'a}k} (\bibinfo{year}{1997}):
  \emph{\bibinfo{title}{Lower Bounds for Resolution and Cutting Plane Proofs
  and Monotone Computations}}.
\newblock {\sl \bibinfo{journal}{J. Symb. Log.}}
  \bibinfo{volume}{62}(\bibinfo{number}{3}), pp. \bibinfo{pages}{981--998},
  \doi{10.2307/2275583}.

\bibitemdeclare{article}{RodittyZ08}
\bibitem{RodittyZ08}
\bibinfo{author}{Liam Roditty} \& \bibinfo{author}{Uri Zwick}
  (\bibinfo{year}{2008}): \emph{\bibinfo{title}{Improved Dynamic Reachability
  Algorithms for Directed Graphs}}.
\newblock {\sl \bibinfo{journal}{SIAM J. Comput.}}
  \bibinfo{volume}{37}(\bibinfo{number}{5}), pp. \bibinfo{pages}{1455--1471},
  \doi{10.1137/060650271}.

\bibitemdeclare{techreport}{antom}
\bibitem{antom}
\bibinfo{author}{Tobias Schubert}, \bibinfo{author}{Matthew Lewis} \&
  \bibinfo{author}{Bernd Becker} (\bibinfo{year}{2010}):
  \emph{\bibinfo{title}{Antom: Solver Description}}.
\newblock \bibinfo{type}{Technical Report}, \bibinfo{institution}{{SAT} Race}.

\bibitemdeclare{inproceedings}{tseitin68}
\bibitem{tseitin68}
\bibinfo{author}{G.S. Tseitin} (\bibinfo{year}{1968}): \emph{\bibinfo{title}{On
  the complexity of derivation in propositional calculus}}.
\newblock In \bibinfo{editor}{A.O. Slisenko}, editor: {\sl
  \bibinfo{booktitle}{Structures in Constructive Mathematics and Mathematical
  Logic, Part II, Seminars in Mathematics (translated from Russian)}},
  \bibinfo{publisher}{Steklov Mathematical Institute}, pp.
  \bibinfo{pages}{115--¡V125}.

\bibitemdeclare{inproceedings}{WintersteigerHM09}
\bibitem{WintersteigerHM09}
\bibinfo{author}{Christoph~M. Wintersteiger}, \bibinfo{author}{Youssef Hamadi}
  \& \bibinfo{author}{Leonardo de~Moura} (\bibinfo{year}{2009}):
  \emph{\bibinfo{title}{A Concurrent Portfolio Approach to SMT Solving}}.
\newblock In: {\sl \bibinfo{booktitle}{Proc. of CAV}}, {\sl
  \bibinfo{series}{Lecture Notes in Computer Science}} \bibinfo{volume}{5643},
  \bibinfo{publisher}{Springer}, pp. \bibinfo{pages}{715--720},
  \doi{10.1007/978-3-642-02658-4_60}.

\end{thebibliography}

\end{document}